\newcommand{\incremental}{\textsc{IGC}\xspace}
\newcommand{\fullydynamic}{\textsc{FDGC}\xspace}
\newcommand{\pgdnaive}{\textsc{PGDN}\xspace}
\newcommand{\wikitalk}{\textbf{\texttt{WikiTalk}}\xspace}
\newcommand{\wikitalks}{\textbf{\texttt{WT}}\xspace}
\newcommand{\wikivote}{\textbf{\texttt{WikiVote}}\xspace}
\newcommand{\wikivotes}{\textbf{\texttt{WV}}\xspace}
\newcommand{\pokec}{\textbf{\texttt{Soc-Pokec}}\xspace}
\newcommand{\pokecs}{\textbf{\texttt{SP}}\xspace}
\newcommand{\lj}{\textbf{\texttt{soc-LiveJournal}}\xspace}
\newcommand{\ljs}{\textbf{\texttt{LJ}}\xspace}
\newcommand{\pgd}{\textsc{PGD}\xspace}
\newcommand{\remove}[1]{}
\newcommand{\butterfly}
\newtheorem{lemma}{Lemma}
\newtheorem{problem}{Problem}
\title{Efficient Batch Dynamic Graphlet Counting}
\author{Hriday G}
\email{f20191212@hyderabad.bits-pilani.ac.in}
\affiliation{\institution{BITS Pilani, Hyderabad Campus}
\city{Hyderabad}
\country{India}}
\author{Pranav Saikiran Sista}
\email{f20171225h@alumni.bits-pilani.ac.in}
\affiliation{\institution{BITS Pilani, Hyderabad Campus}
\city{Hyderabad}
\country{India}}
\author{Apurba Das}
\email{apurba@hyderabad.bits-pilani.ac.in}
\affiliation{\institution{BITS Pilani, Hyderabad Campus}
\city{Hyderabad}
\country{India}}
\date{}
\begin{document}

\begin{abstract}
Graphlet counting is an important problem as it has numerous applications in several fields, including social network analysis, biological network analysis, transaction network analysis, etc. Most of the practical networks are dynamic. A graphlet is a  subgraph with a fixed number of vertices and can be induced or non-induced. There are several works for counting graphlets in a static network where graph topology never changes. Surprisingly, there have been no scalable and practical algorithms for maintaining all fixed-sized graphlets in a dynamic network where the graph topology changes over time. We are the first to propose an efficient algorithm for maintaining graphlets in a fully dynamic network. Our algorithm is efficient because (1) we consider only the region of changes in the graph for updating the graphlet count, and (2) we use an efficient algorithm for counting graphlets in the region of change. We show by experimental evaluation that our technique is more than 10x faster than the baseline approach.
\end{abstract}

\maketitle

\section{Introduction}

Graphlets are induced subgraphs with significant frequency counts that can be found in any network. The graphlets which are considered to be important are those of sizes 3, 4, and 5  and can be identified in both directed and undirected graphs. Given an undirected graph $G$, with $n$ vertices and $m$ edges, counting graphlets of size $k$ is defined as counting the number of subgraphs isomorphic to subgraphs of a fixed size $k$. It is a computationally intensive problem since there are $O(|V|^k)$ possible subgraphs of size $k$. Despite that, graphlet counting is an important problem due and it has served as a building block in solving problems in several areas.

In computational biology, they have been used for detecting cancer cells~\cite{milenkovic2010systems}, and for analyzing protein-protein interactions~\cite{milenkovic2008uncovering, prvzulj2004modeling}. Other areas where they have been used include bioinformatics~\cite{VI+-JCB-2010,WH+-TCBB-2012,SH-CN-2015,WZ+-TKDE-2017}, computer vision~\cite{HB-CVPR-2007,ZS+-TIP-2012,ZS+-CVPR-2013}, social network analysis~\cite{SPT-ICDM-2013,RBA-TKDE-2014,WL+-TKDD-2014, janssen2012model} and so on. They have also been used to compare graphs using graph kernels~\cite{shervashidze2009efficient}.

Although most research involving graphlets has been done on static graphs, progress has been made on dynamic networks. This is pretty significant since this will open up ways to analyze networks that change over time. Solving such problems will give graphlets more real-world applications since most problems are dynamic in nature. This is especially true in the case of biological networks. Knowing the counts of graphlets in a biological network can provide information about the network's properties. However, these networks can frequently change over time. For example, when a stem cell differentiate into other cell types during its development process~\cite{mukherjee2018counting}, or when chromosomes' chromatin structures change due to various events~\cite{mukherjee2018counting}. Due to the ever-changing nature of these networks, maintaining the counts of graphlets statically will not be sufficient as they will be outdated. It is necessary to update the counts whenever the network changes. Similar problems are present in other areas as well. For instance, social networks are constantly changing with new connections being added and existing connections are becoming obsolete, or new transactions are getting added and old transactions are getting removed from the historical data store in transaction databases.

In order to solve such problems, we have proposed an algorithm that maintains the counts of 4-node graphlets in a network which is changing over time due to the addition and deletion of edges.
\section{Related Works}
Most of the previous works have focused on counting only a special type of graphlets, such as cliques and cycles~\cite{k-clique-listing, triangle-counting-stream-1, triangle-four-cycle-counting, fully-dynamic-triangle-count}. However, counting all graphlets of a fixed size has been used extensively in recent times in graph mining and machine learning applications. We discuss the state-of-the-art algorithms for graphlet counting in static and dynamic graphs.

\noindent\textbf{Graphlet counting in static network:}~Counting graphlets exactly~\cite{graphlet-exact-1, PSV-WWW-2017} and approximately~\cite{graphlet-estimate-1,RBA-TKDE-2014,WZ+-TKDE-2017, path-sampling} in a static network has been studied extensively. In~\cite{graphlet-exact-1}, the authors develop an algorithm for counting graphlets up to size $4$. In their work, they iterate over edges to compute smaller-size graphlets and use combinatorial arguments to extend them to large graphlets in constant time. Their algorithm is scalable and easy to parallelize. In~\cite{PSV-WWW-2017}, the authors have developed an algorithm for exactly counting all graphlets of size $5$. Their framework is based upon cutting a big subgraph pattern into smaller ones and using the smaller patterns to get the counts of larger patterns. These two algorithms are state-of-the-art for exact counting up to size $5$. Approximate graphlet counting algorithms are developed based on techniques such as path sampling~\cite{path-sampling}, edge and local neighborhood sampling~\cite{graphlet-estimate-1, RBA-TKDE-2014}, subgraph sampling~\cite{WZ+-TKDE-2017}.

\noindent\textbf{Graphlet counting in dynamic network:}~Over the last few years, a specific type of subgraph such as cycle~\cite{dynamic-cycle-1,fully-dynamic-triangle-count, dynamic-cycle-2}, clique~\cite{batch-dynamic-clique,fully-dynamic-clique} has been studied extensively. Hanauer, Henzinger, \& Hua~\cite{fully-dynamic-graphlet} first proposed an algorithm for maintaining the number of all graphlets of size $4$ and provided rigorous theoretical analysis on the bounds on update and query complexity, update time, query time, and space complexity. Our work differs from theirs in several aspects: (1) Our algorithm is practical - it is based on one of the most efficient algorithms for exact counting $4$ node graphlets in practice such as \textit{PGD}. (2) Space complexity of our algorithm is typically low as opposed to the work of Hanauer et al. where the space complexity is $O(n^2)$ as it requires maintaining several data structures along with the dynamic graph in their work. In contrast, we only need to maintain the dynamic graph. (3) Our algorithm works in a batch dynamic mode, meaning we process all the edges in a batch at a time instead of updating one edge at a time.

\remove{
Two algorithms which were valuable in conducting our research were \cite{ANRD-ICDM-2015}, which contained efficient methods for maintaining the count of graphlets in a static graph, and \cite{CL-ASONAM-2017}, which contains an efficient algorithm to maintain graphlets when edges are added one at a time. The algorithm in \cite{ANRD-ICDM-2015} is part of our algorithm for getting the graphlet counts. The algorithm in \cite{CL-ASONAM-2017} has been used for an alternate algorithm, where \cite{CL-ASONAM-2017} is run on every edge in a batch. This algorithm is compared with the main algorithm we are proposing.
}
\section{Problem Definition}

We consider simple undirected graphs and model the dynamic graph as a sequence of edge additions and deletions.

We consider monitoring a graph that changes over time. Assume that, for any $t\geq 0$, $G^t = (V^t,E^t)$ be the graph up to time $t$ where $V^t$ is the set of vertices, and $E^t$ is the set of edges. For any $t\geq 0$, at time $t+1$, we add a tuple $<e,o>$ from the stream, where $o\in\{+,-\}$ represents an update process, add edge or delete edge, $e = (u,v)$ a pair of vertices. A graph $G^{t+1} = (V^{t+1}, E^{t+1})$ is generated by adding a new edge or deleting an existing edge as follows:
\[
        E^{t+1}= 
        \begin{cases}
            E^{t}\cup\{u\} & \textit{if } o=+ \\
            E^{t}\setminus\{u\} & \textit{if } o=-
        \end{cases}
\]

The addition and deletion of vertices can be handled similarly. Moreover, we assume that by adding an edge $e = (u,v)$, the vertices $u$ and $v$ are added to the graph if they are not present at time $t$. Furthermore, when a vertex is deleted, we assume all the incident edges are deleted before deleting the vertex. For simplicity, the rest of the paper deals with edge addition and deletion - vertex operations can be handled easily by iteratively deleting all the edges adjacent to a vertex.

The neighborhood of a vertex $u\in V^t$ at time $t$ is defined as $N^t(u) = \{v | (u,v)\in E^t\}$. Similarly, the $k$-hop neighborhood of $u$ at time $t$ is denoted as $N^t_k(u)$. When the context is clear, we will use $N(u)$ and $N_k(u)$ to mean neighborhood and $k$-hop neighborhood.

Given a simple undirected graph $G=(V,E)$, a graphlet of size $k$ nodes is defined as a subgraph $g_k\subset G$ where $g_k$ has $k$ vertices. A graphlet $g_k$ is an \textbf{induced graphlet} if $g_k$ is an induced subgraph of $G$ and it is a \textbf{connected graphlet} if $g_k$ is a connected subgraph. Given a subset $S\subset V$, $G[S]$ denotes the subgraph of $G$ induced by 
$S$.

Next, we define the problem of maintaining the frequency of connected induced graphlets of size-$k$. 

\begin{problem}
    Given an evolving graph $G^t=(V,E^t)$, an integer $k$, and a batch of edges $\mathcal{B} = \mathcal{A}\cup\mathcal{D}$, maintain the frequency of connected induced size-$k$ graphlets in $G^{t+1} = (V,E^t\cup\{\mathcal{A}\}\setminus\{\mathcal{D}\})$. 
\end{problem}

We focus on a fully dynamic setting when the addition and deletion of edges are allowed in an arbitrary order in a batch. We aim to maintain the frequency of all connected induced size-$4$ graphlets (shown in Figure~\ref{fig:graphlets}) without recomputing from scratch.

\remove{
Given an undirected graph $G = (V, E)$ with $n$ vertices and $m$ edges, graphlets are connected subgraph patterns. {Fig.~\ref{fig:graphlets} shows all 4-node graphlets}. The dynamic $k$-graphlet counting problem maintains the number of graphlets in the graph upon edge insertions and deletions, given individually or in a batch. {\color{red} In this work, we consider the incremental settings where only additions are supported}.
}

\begin{figure}[h]
    \centering
    \includegraphics[scale=0.4]{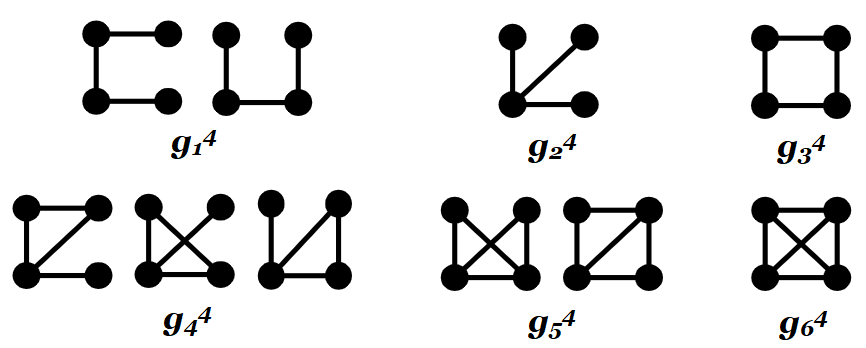}
    \caption{A list of different kinds of graphlets from the perspective of the bottom edge. From top to bottom, left to right, the graphlets are as follows: 3-paths ($g_1^4$), a 3-star ($g_2^4$), a 4-cycle ($g_3^4$), tailed triangles ($g_4^4$), diamonds ($g_5^4$), and a 4-clique ($g_6^4$).}
    \label{fig:graphlets}
\end{figure}

In this work, we develop algorithms for the maintenance of size-$4$ graphlets. In doing so, we identify the local subgraph and count graphlets in the local subgraph using a state-of-the-art deterministic graphlet counting algorithm such as {\tt PGD}.

\noindent\textbf{PGD Algorithm:}~ The algorithm counts graphlets of size $3$ and $4$ in a static simple undirected graph. The algorithm leverages combinatorial arguments for different graphlets.

The algorithm iterates over the edges of the graph. Each iteration counts only a few graphlets and uses combinatorial arguments to derive the exact count of the rest of the graphlets in constant time. We leverage this algorithm to maintain the graphlet count in a dynamic network by appropriately choosing the subgraph and applying PGD on that subgraph with a guarantee that we can maintain the global graphlet count by counting graphlets in that subgraph only.

\remove{
In this paper, we analyze our algorithms in the work-depth model, where the \textbf{work} of an algorithm is defined to be the total number of operations performed, and the \textbf{depth} is defined to be the longest sequential dependency. We assume that the concurrent read and write are supported with $O(1)$ work and depth. A \textbf{work-efficient} parallel algorithm is one with work that asymptotically matches the best-known sequential time complexity for the problem.
}
\section{Algorithms}

In this section, we design efficient algorithms for maintaining size-$4$ graphlets in a dynamic graph. First, we will describe the algorithm for an incremental batch where the goal is to update the graphlet count once a batch of new edges is added to the graph. Next, we will describe the algorithm for a fully dynamic batch consisting of edges for the addition and edges for the deletion. The algorithm's objective is to maintain the frequency of all induced graphlets in the dynamic graph.

\remove{
Consider the addition of a new batch of edges $\mathcal{B}$ to the graph $G=(V,E)$. Assume that $\mathcal{F}_4$ is the frequency vector of all the size-$4$ induced graphlets. When $G$ is updated to $G' = G+\mathcal{B}$, a set of new graphlets $\mathcal{N}_4$ will be generated, and a set of existing graphlets $\mathcal{D}_4$ will be subgraphs of the new graphlets that we call subsumed graphlets. Suppose the frequency vector for the new graphlets is denoted by $\mathcal{F}^{N}_4$, and the frequency vector for the subsumed graphlets is denoted by $\mathcal{F}^{D}_4$. The updated graphlet frequency vector $\mathcal{F}'_4 = \mathcal{F}_4 + \mathcal{F}^{N}_4 - \mathcal{F}^{D}_4$. We have used several algorithms to demonstrate how existing methods can solve this problem and how the algorithm we are proposing will be better.
}

\subsection{Baseline Algorithm}

A straightforward algorithm for maintaining graphlet counts is to count all the graphlets in the updated graph using a state-of-the-art graphlet counting algorithm such as PGD~\cite{ANRD-ICDM-2015}. We name this baseline as \pgdnaive. This counting-from-scratch strategy is inefficient when the batch of edges touches a tiny portion of the graph. At the same time, the static algorithm for graphlet counting has to work on the much larger region of the graph for graphlet counting, which is required for maintaining the global count. An efficient technique would be to update the graphlet based on the count from the graph region where the change is located.

\remove{
The state-of-the-art algorithm for maintaining the graphlets in a dynamic graph deals with single-edge addition~\cite{CL-ASONAM-2017}. We tried to modify this algorithm to work on the batch update instead of a single-edge update. Still, we found it difficult as it is unclear how to keep track of the changes in the graphlet due to adding more than one edge when we consider the maintenance of induced graphlet counts.
}

\remove{
\subsection{Single Edge Graphlet Counting Algorithm}
Another approach that was considered was processing each edge in a newly added batch one by one and updating the graphlet count accordingly. This was done by using \cite{CL-ASONAM-2017}, which counted the number of graphlets modified by the edge of one edge to a graph. We devised an algorithm which used \cite{CL-ASONAM-2017} over each new edge in a batch to update the global graphlet count. The results of this algorithm were compared against the main algorithm which we are proposing in this paper.

\begin{algorithm}

Initialization:\;
Global Graphlet Counts $\mathcal{F}_4$ = [0, 0, 0, 0, 0, 0]\;

\ForEach{Batch $\mathcal{B} \in Inputs$}{

\ForEach{Edge $e \in B$}{
$\mathcal{G}\gets\mathcal{G}+e$\;
$\mathcal{C}' \gets \mathcal{C}_a + ASONAM(\mathcal{G}, e)$\;
$\mathcal{F}_4 \gets \mathcal{F}_4 + \mathcal{C}'$\;
}
}
    \Return $\mathcal{F}_4$\; 
\caption{Single Edge Graphlet Counting Algorithm}
\label{algo-ASONAM}
\end{algorithm}
}

\subsection{Incremental Stream}
Now we will present an efficient algorithm for updating the graphlet count once a batch of edges $\mathcal{B}$ is added to the original graph $G$ through systematically exploring the subgraphs local to the changes in the graph.  Once the subgraphs are explored, we count the graphlets in the local subgraphs. Finally, we aggregate these local counts to maintain the global graphlet counts in the updated graph. 

Suppose a batch of new edges $\mathcal{B} = \{e_1, e_2, ..., e_l\}$ is added to the graph $G = (V, E)$ and the updated graph is $G' = G+\mathcal{B} = (V,E\cup\mathcal{B})$. This algorithm maintains two sub-graphs $\mathcal{G}_a$ and $\mathcal{G}_b$ of $\mathcal{G}$. It is easy to see that the diameter of size-$4$ graphlet is $3$, which occurs in the case of a $3$-path (Figure~\ref{fig:graphlets}). Hence, the construction of $\mathcal{G}_a$ is as follows: For each edge $e_i = (u_i, v_i)$, we construct a subgraph $g_i$ induced by the vertex set $W_i = N_3(u_i)\cup N_3(v_i)$. Finally, $\mathcal{G}_a = g_1\cup g_2\cup...\cup g_l$. We construct $\mathcal{G}_b$ by removing all the edges in $\mathcal{B}$ from $\mathcal{G}_a$, formally, $\mathcal{G}_b = \mathcal{G}_a\setminus\mathcal{B}$.

The pseudocode is presented in Algorithm~\ref{algo-1}. The algorithm first updates the original graph $\mathcal{G}$ by adding the batch $\mathcal{B}$ to it. Next, it creates $\mathcal{G}_a$ and $\mathcal{G}_b$ and counts graphlets in $\mathcal{G}_a$ and $\mathcal{G}_b$. The following lemma shows that we can maintain the exact count of the graphlets using the graphlet counts in $\mathcal{G}_a$ and $\mathcal{G}_b$.\\

\begin{algorithm}[ht!]

Initialization:\;
Global Graphlet Counts $\mathcal{F}_4$ = [0, 0, 0, 0, 0, 0]\;
$\mathcal{G}_a$ Graphlet Counts $\mathcal{C}_a$ = [0, 0, 0, 0, 0, 0]\;
$\mathcal{G}_b$ Graphlet Counts $\mathcal{C}_b$ = [0, 0, 0, 0, 0, 0]\;

\ForEach{Batch $\mathcal{B} \in Inputs$}{

$\mathcal{G}\gets\mathcal{G}+\mathcal{B}$\;
Generate graph $\mathcal{G}_a$ = ($\mathcal{V}_a$, $\mathcal{E}_a$)\;
Generate graph $\mathcal{G}_b$ = ($\mathcal{V}_b$, $\mathcal{E}_b$)\;

$\mathcal{C}_a \gets PGD(\mathcal{G}_a)$\;
$\mathcal{C}_b \gets PGD(\mathcal{G}_b)$\;
$\mathcal{C}' \gets \mathcal{C}_a - \mathcal{C}_b$\;
 $\mathcal{F}_4 \gets \mathcal{F}_4 + \mathcal{C}'$\;
}
    \Return $\mathcal{F}_4$\; 
\caption{\incremental: Incremental Graphlet Counting}
\label{algo-1}
\end{algorithm}

\begin{lemma}\label{lemma-1}
Suppose a batch of new edges $\mathcal{B}$ is added to the graph $\mathcal{G} = (\mathcal{V}, \mathcal{E})$ and the graph is updated to $\mathcal{G}'$. Then, $\mathcal{F}^{\mathcal{G}'}_4 = \mathcal{F}^{\mathcal{G}}_4 + \mathcal{C}_a-\mathcal{C}_b$.
\end{lemma}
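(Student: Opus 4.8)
The plan is to express each graphlet-frequency vector as a sum of per-subset contributions and then argue that exactly the contributions that change when $\mathcal{B}$ is inserted are the ones recorded in the local subgraphs $\mathcal{G}_a$ and $\mathcal{G}_b$. For a graph $H$ on vertex set $V$, define $\chi_H(S)$, for each $4$-element $S\subseteq V$, to be the unit coordinate vector of the isomorphism type of the induced subgraph $H[S]$ when $H[S]$ is connected, and the zero vector otherwise; then $\mathcal{F}^H_4=\sum_{S}\chi_H(S)$ by definition of ``connected induced size-$4$ graphlet count''. Writing $W=\bigcup_i W_i$ for the common vertex set of $\mathcal{G}_a$ and $\mathcal{G}_b$, and invoking correctness of PGD to identify $\mathcal{C}_a$ with $\mathcal{F}^{\mathcal{G}_a}_4$ and $\mathcal{C}_b$ with $\mathcal{F}^{\mathcal{G}_b}_4$, the lemma reduces to the identity $\sum_{S\subseteq V}\bigl(\chi_{\mathcal{G}'}(S)-\chi_{\mathcal{G}}(S)\bigr)=\sum_{S\subseteq W}\bigl(\chi_{\mathcal{G}_a}(S)-\chi_{\mathcal{G}_b}(S)\bigr)$, which I would establish by matching terms.

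Two easy reductions come first. A term $\chi_{\mathcal{G}'}(S)-\chi_{\mathcal{G}}(S)$ can be nonzero only if $\mathcal{G}'[S]\neq\mathcal{G}[S]$; since $\mathcal{G}'=\mathcal{G}+\mathcal{B}$ with $\mathcal{B}$ a set of edges not already present in $\mathcal{G}$, this forces some batch edge $e_i=(u_i,v_i)$ to have both endpoints in $S$ (the same observation applies to $\chi_{\mathcal{G}_a}(S)-\chi_{\mathcal{G}_b}(S)$ because $\mathcal{G}_b=\mathcal{G}_a\setminus\mathcal{B}$). Moreover, a nonzero such term forces $\mathcal{G}'[S]$ to be connected: if $\mathcal{G}[S]$ were connected then so would be its supergraph $\mathcal{G}'[S]$, and if neither were connected the term vanishes. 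The heart of the argument is then the locality claim that \emph{whenever $e_i=(u_i,v_i)\in\mathcal{B}$ with $u_i,v_i\in S$ and $\mathcal{G}'[S]$ is connected, we have $S\subseteq W_i=N_3(u_i)\cup N_3(v_i)$}, with the neighborhoods taken in the \emph{updated} graph $\mathcal{G}'$ — which is precisely what Algorithm~\ref{algo-1} builds, since it updates $\mathcal{G}$ before constructing $\mathcal{G}_a$. This holds because every connected graph on $4$ vertices has diameter at most $3$, the extreme case being the $3$-path $g_1^4$; hence each $w\in S$ lies within distance $3$ of $u_i$ inside $\mathcal{G}'[S]$, and therefore within distance $3$ of $u_i$ in $\mathcal{G}'$ as well, since distances in an induced subgraph are never smaller than in the ambient graph (and $u_i\in N_3(v_i)$ takes care of the endpoints). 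I expect this claim — especially the realization that the neighborhoods must be computed in $\mathcal{G}'$ rather than in $\mathcal{G}$, and the care needed because $\mathcal{G}_a$ is a union of induced subgraphs rather than one induced subgraph — to be the only genuinely delicate point; the rest is bookkeeping.

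It then remains to reconcile the matched terms. Fix a $4$-set $S$ containing a batch edge $e_i$ with $\mathcal{G}'[S]$ connected; by the locality claim $S\subseteq W_i$, so $g_i=\mathcal{G}'[W_i]$ already contains every edge of $\mathcal{G}'$ lying inside $S$. Since $\mathcal{G}_a$ is the union of the $g_j$'s and each $g_j$ is a subgraph of $\mathcal{G}'$, the edges of $\mathcal{G}_a$ inside $S$ coincide with the edges of $\mathcal{G}'$ inside $S$, i.e. $\mathcal{G}_a[S]=\mathcal{G}'[S]$; deleting the batch edges then gives $\mathcal{G}_b[S]=\mathcal{G}[S]$. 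Consequently $\chi_{\mathcal{G}_a}(S)=\chi_{\mathcal{G}'}(S)$ and $\chi_{\mathcal{G}_b}(S)=\chi_{\mathcal{G}}(S)$ for every such $S$, whereas for every $4$-set with no batch edge inside it all four indicators agree and the terms cancel on both sides (such sets contribute $0$ to the right-hand sum whether or not they lie in $W$). Summing over all $4$-sets yields $\mathcal{F}^{\mathcal{G}'}_4-\mathcal{F}^{\mathcal{G}}_4=\mathcal{C}_a-\mathcal{C}_b$, i.e. $\mathcal{F}^{\mathcal{G}'}_4=\mathcal{F}^{\mathcal{G}}_4+\mathcal{C}_a-\mathcal{C}_b$, as claimed.
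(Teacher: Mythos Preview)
Your proof is correct and follows the same three-way case split as the paper (graphlets containing a batch edge, graphlets that become non-induced, and unaffected graphlets), only recast more formally via per-$4$-set indicator vectors; your explicit treatment of the locality step (via the diameter-$\le 3$ bound) and of the subtlety that $\mathcal{G}_a$ is a \emph{union} of induced subgraphs rather than a single induced subgraph is in fact more careful than the paper's own argument, which simply asserts that $\mathcal{C}_a$ ``captures'' the relevant graphlets. The one case you do not spell out in your final summation---a $4$-set $S$ containing a batch edge but with $\mathcal{G}'[S]$ disconnected---is harmless: since $\mathcal{G}_a\subseteq\mathcal{G}'$ and $\mathcal{G}_b\subseteq\mathcal{G}_a$, both $\mathcal{G}_a[S]$ and $\mathcal{G}_b[S]$ are disconnected as well, so all four indicators vanish and such $S$ contribute $0$ to both sides.
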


\begin{proof}
There are three cases to consider: 

\textbf{case-1:}~all graphlets containing at least one new edge are added to the count. Observe that $\mathcal{C}_a$ captures all the graphlets that contain at least one new edge. Further, $\mathcal{C}_b$ will never capture any of these graphlets since they are missing some edges in $\mathcal{G}_b$.  So, the count of every graphlet containing at least a new edge is added exactly once to the global count to update the answer.

\textbf{case-2:}~all graphlets extended to a new graphlet by addition of edges are removed from the count. Observe that $\mathcal{C}_b$ captures all the graphlets as we remove all the new edges from $\mathcal{G}_a$ to get $\mathcal{G}_b$ and $\mathcal{C}_a$ will never capture the count of any such graphlets as they become non-induced once a batch is added. So, the count of these graphlets is subtracted exactly once.

\textbf{case-3:}~ All graphlets that do not change after the addition of new edges do not contribute to the changes in the global graphlet counts. Since the graphlets were present before and after the addition of edges, they must be captured by both $\mathcal{C}_a$ and $\mathcal{C}_b$. So the count of every such graphlet is added exactly once and subtracted exactly once. So, the global count does not change.
\end{proof}

\remove{
\subsection{Efficient Algorithm - Decremental Maintenance}

Suppose a batch $D$ of existing edges is removed from the graph $G=(V, E)$. We can update the graphlet counts through a reduction to the additional case. Assume that the edges in $D$ are being added instead of getting removed. Then following Lemma-1, $\mathcal{C}_a-\mathcal{C}_b$ is the number of graphlets that contain at least one edge in $D$. Thus, instead of adding, we can subtract $\mathcal{C}_a-\mathcal{C}_b$ from the current graphlet counts at Line-$12$ of Algorithm~\ref{algo-1} to get an updated count when a batch $D$ is deleted.
}

\subsection{Fully Dynamic Stream}

In this section, we describe our algorithm for maintaining global graphlet count in a fully dynamic setting where both insertion and deletion of edges are possible.



Suppose $\mathcal{B} = \mathcal{A}\cup\mathcal{D}$ is a batch of edges where $\mathcal{A}$ is a batch of edges to add and $\mathcal{D}$ is a batch of edges to delete. and Let, $H = \cup_{e = (u,v)\in\mathcal{A}\cup\mathcal{D}}G[\mathcal{N}_3(u)\cup\mathcal{N}_3(v)]$. Next we define $\mathcal{G}_a = (H\setminus\mathcal{D})\cup\mathcal{A}$ and $\mathcal{G}_b = (H\setminus\mathcal{A})\cup\mathcal{D}$ and then count the number of graphlets in $\mathcal{G}_a$ and $\mathcal{G}_b$. We use these counts in a systematic way to update the global graphlet count after processing the batch $\mathcal{B}$. The pseudocode is presented in Algorithm~\ref{fully-dynamic}

\begin{algorithm}[ht!]

Initialization:\;
Global Graphlet Counts $\mathcal{F}_4$ = [0, 0, 0, 0, 0, 0]\;
$\mathcal{G}_a$ Graphlet Counts $\mathcal{C}_a$ = [0, 0, 0, 0, 0, 0]\;
$\mathcal{G}_b$ Graphlet Counts $\mathcal{C}_b$ = [0, 0, 0, 0, 0, 0]\;

\ForEach{Batch $\mathcal{B} = \mathcal{A}\cup\mathcal{D} \in Inputs$}{

$H = \cup_{e\in\mathcal{A}\cup\mathcal{D}}G[\mathcal{N}_3(e)]$\;
$\mathcal{G}_a = (H\setminus\mathcal{D})\cup\mathcal{A}$\;
$\mathcal{G}_b = (H\setminus\mathcal{A})\cup\mathcal{D}$\;

$\mathcal{C}_a \gets PGD(\mathcal{G}_a)$\;
$\mathcal{C}_b \gets PGD(\mathcal{G}_b)$\;
$\mathcal{C}' \gets \mathcal{C}_a - \mathcal{C}_b$\;
 $\mathcal{F}_4 \gets \mathcal{F}_4 + \mathcal{C}'$\;
 $\mathcal{G}\gets\mathcal{G}\setminus\mathcal{D}\cup\{\mathcal{A}\}$\;
}
    \Return $\mathcal{F}_4$\; 
\caption{\fullydynamic: Fully Dynamic Graphlet Counting}
\label{fully-dynamic}
\end{algorithm}

\begin{lemma}
    Algorithm~\ref{fully-dynamic} correctly updates the global graphlet frequency.
\end{lemma}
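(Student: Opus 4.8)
The plan is to follow the template of the proof of Lemma~\ref{lemma-1}, reading $\mathcal{G}_a$ as a snapshot of the affected region \emph{after} the batch $\mathcal{B}=\mathcal{A}\cup\mathcal{D}$ has been applied and $\mathcal{G}_b$ as the snapshot of the same region \emph{before}. Write $\mathcal{G}$ for the pre-batch graph and $\mathcal{G}'=\mathcal{G}\setminus\mathcal{D}\cup\mathcal{A}$ for the post-batch graph. I would first record the bookkeeping facts that $\mathcal{G}_a$ and $\mathcal{G}_b$ share the vertex set $V(H)$ and that their edge sets differ \emph{exactly} on $\mathcal{A}\cup\mathcal{D}$, with $\mathcal{G}_a$ carrying all of $\mathcal{A}$ and none of $\mathcal{D}$ and $\mathcal{G}_b$ the reverse (using that $\mathcal{A}$ is disjoint from $E(\mathcal{G})$ and $\mathcal{D}\subseteq E(H)$). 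The goal is then to prove $\mathcal{C}_a-\mathcal{C}_b=\mathcal{F}^{\mathcal{G}'}_4-\mathcal{F}^{\mathcal{G}}_4$; the lemma follows by induction over the stream of batches, since $\mathcal{F}_4$ is updated by exactly this difference.

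I would split all connected induced size-$4$ vertex sets $S$ of $\mathcal{G}$ or $\mathcal{G}'$ into the \emph{unaffected} ones, with $\mathcal{G}[S]=\mathcal{G}'[S]$, and the \emph{affected} ones, with $\mathcal{G}[S]\neq\mathcal{G}'[S]$. An unaffected $S$ contributes $0$ to $\mathcal{F}^{\mathcal{G}'}_4-\mathcal{F}^{\mathcal{G}}_4$, and it also contributes $0$ to $\mathcal{C}_a-\mathcal{C}_b$: if $S\not\subseteq V(H)$ it appears in neither count, and if $S\subseteq V(H)$ then no edge of $\mathcal{A}\cup\mathcal{D}$ lies inside $S$ (otherwise $\mathcal{G}[S]\neq\mathcal{G}'[S]$), so $\mathcal{G}_a$ and $\mathcal{G}_b$ restrict to the same graph on $S$ and it is counted once in each. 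Hence the whole content is in the affected sets, and for each such $S$ it suffices to establish the locality claim: (a) $S\subseteq V(H)$, and (b) $\mathcal{G}_a[S]=\mathcal{G}'[S]$ and $\mathcal{G}_b[S]=\mathcal{G}[S]$. Granting this, an affected $S$ is counted in $\mathcal{C}_a$ precisely when it is a graphlet of $\mathcal{G}'$ and in $\mathcal{C}_b$ precisely when it is a graphlet of $\mathcal{G}$, so it contributes to $\mathcal{C}_a-\mathcal{C}_b$ exactly its net change in the global count, and summing over affected $S$ gives the identity.

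For the locality claim, note $\mathcal{G}[S]\neq\mathcal{G}'[S]$ forces some $(a,b)\in\mathcal{A}\cup\mathcal{D}$ with $a,b\in S$. If $\mathcal{G}[S]$ is connected, then since connected four-vertex graphs have diameter at most $3$, every vertex of $S$ is within distance $3$ of $a$ in $\mathcal{G}$, so $S\subseteq\mathcal{N}^{\mathcal{G}}_3(a)$ and every edge of $\mathcal{G}[S]$ lies in $\mathcal{G}[\mathcal{N}^{\mathcal{G}}_3(a)]\subseteq H$; combined with the way $\mathcal{G}_a,\mathcal{G}_b$ toggle the $\mathcal{A}$- and $\mathcal{D}$-edges, this yields (a) and (b). If $\mathcal{G}[S]$ is disconnected — so $S$ is a graphlet only in $\mathcal{G}'$ — I would delete the at most three edges of $\mathcal{A}\cap\binom{S}{2}$ from the connected graph $\mathcal{G}'[S]$; the result $\mathcal{G}[S]\setminus\mathcal{D}$ is disconnected, hence has components of at most three vertices each, and each component meets $\mathcal{A}$ (a component touching no deleted edge would already be a component of the connected $\mathcal{G}'[S]$), so each component lies within distance $2$ in $\mathcal{G}$ of one of its own $\mathcal{A}$-endpoints and therefore inside a single ball $\mathcal{N}^{\mathcal{G}}_3(c)$; the surviving edges of $\mathcal{G}[S]$ are thus all in $H$, the $\mathcal{D}$-edges inside $S$ are in $H$ as well since their endpoints are adjacent in $\mathcal{G}$, and the $\mathcal{A}$-edges inside $S$ are re-inserted by $\mathcal{G}_a$, which again gives (a) and (b).

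The step I expect to be the main obstacle is exactly this disconnected case — graphlets that exist only after the batch. Two features make it delicate: deletions only increase distances, so the $3$-hop balls must be taken in the \emph{pre-batch} graph $\mathcal{G}$ and one must argue the relevant vertices are still within reach there; and because $H$ is a \emph{union} of balls rather than one induced neighborhood, an edge of $\mathcal{G}$ whose endpoints fall in different balls need not survive in $H$, so the argument cannot merely bound distances but must pin each edge of $\mathcal{G}[S]$ and of $\mathcal{G}'[S]$ inside one concrete ball. The component/anchor argument above is what resolves this, and I would package it as a standalone sub-claim: every size-$4$ vertex set $S$ touched by a changed edge has $\mathcal{G}[S]\subseteq\mathcal{G}_b$ and $\mathcal{G}'[S]\subseteq\mathcal{G}_a$ with the same induced subgraph (the reverse inclusions are automatic from $E(\mathcal{G}_b)\subseteq E(\mathcal{G})$ and $E(\mathcal{G}_a)\subseteq E(\mathcal{G}')$). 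Once that sub-claim is in place, the unaffected/affected accounting is routine and parallels Lemma~\ref{lemma-1}.
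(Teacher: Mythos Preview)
Your proposal is correct and, in fact, more careful than the paper's own proof. The paper argues by classifying graphlets into four types according to whether they contain edges from $\mathcal{A}$, from $\mathcal{D}$, from both, or from neither, and then asserts for each type whether it is captured by $\mathcal{C}_a$, by $\mathcal{C}_b$, or by both, deferring the ``transitioning'' of induced graphlets to the argument of Lemma~\ref{lemma-1}. Your route is organised differently: you work at the level of four-vertex sets $S$, split into unaffected versus affected, and isolate a locality sub-claim ($\mathcal{G}_a[S]=\mathcal{G}'[S]$ and $\mathcal{G}_b[S]=\mathcal{G}[S]$ for every affected $S$) that you then prove by a component/anchor argument. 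The two decompositions are not the same: the paper's type-4 corresponds to your unaffected case, but your affected case bundles the paper's types 1--3 together and handles them uniformly via the locality sub-claim, whereas the paper treats them separately and somewhat informally.

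What your approach buys is rigour on exactly the point the paper glosses over: because $H$ is a \emph{union} of induced balls rather than a single induced subgraph, an edge of $\mathcal{G}$ whose endpoints lie in different balls need not appear in $H$, so it is not automatic that $\mathcal{G}_a$ and $\mathcal{G}_b$ faithfully reproduce the induced subgraphs on an affected $S$. Your component argument in the disconnected case (each component of $\mathcal{G}[S]\setminus\mathcal{D}$ has at most three vertices and is anchored at an $\mathcal{A}$-endpoint, hence sits inside one ball) is precisely what pins every relevant edge to a single $G[\mathcal{N}_3(u)\cup\mathcal{N}_3(v)]$. The paper's proof does not address this; it simply asserts that the relevant graphlets are ``captured'' by $\mathcal{C}_a$ or $\mathcal{C}_b$. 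So your version is longer but closes a gap that the paper leaves open. One small wording slip: when you write ``a component touching no deleted edge'' you mean ``touching no $\mathcal{A}$-edge'' (the edges you just removed from $\mathcal{G}'[S]$); the logic is fine once that is read correctly.
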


\begin{proof}

\remove{
    As an alternate approach to Algorithm~\ref{fully-dynamic}, assume that we first consider deletions of edges from $\mathcal{D}$ and then the addition of edges from $\mathcal{A}$. Algorithm~\ref{fully-dynamic} should produce the same result using this alternate approach. Due to the deletion of edges in $\mathcal{D}$, some graphlets will be removed from the global count. Also, some graphlets (which are non-induced during the presence of $\mathcal{D}$ but induced due to the absence of $\mathcal{D}$) will be transitioned in to be added to the global count. Consider an edge $e^{-}\in\mathcal{D}$. Clearly, $\mathcal{C}_b$ counts all induced graphlets containing $e^{-}$ and $\mathcal{C}_a$ counts all induced graphlets without $e^{-}$ which are non-induced in the presence of $e^{-}$. Thus, $\mathcal{C}_a - \mathcal{C}_b$ adds the counts of all induced graphlets, which become non-induced in the presence of $e^{-}$. Next, assume that $e^{+}\in\mathcal{A}$ has been added to the graph. Due to this edge addition, some induced graphlets become non-induced in the presence of $e^{+}$, and they will be transitioned out, and corresponding counts will be subtracted from the global graphlet counts, which are captured in $\mathcal{C}_b$ and the newly induced graphlet count containing $e^{+}$ will be added to the global graphlet counts, which is captured by $\mathcal{C}_a$. Hence the results follow, and the proof is complete.
}

    There are four types of graphlets. (1) {\tt type-1} graphlets containing edges in $\mathcal{D}$ but no edges in $\mathcal{A}$; (2) {\tt type-2} graphlets containing  edges in both $\mathcal{A}$ and $\mathcal{D}$ (3) {\tt type-3} graphlets containing edges in $\mathcal{A}$ but no edge in $\mathcal{D}$ (4) {\tt type-4} graphlets containing edges neither from $\mathcal{A}$ nor from $\mathcal{D}$. All {\tt type-4} graphlets are captured by both $\mathcal{C}_a$ and $\mathcal{C}_b$, So $\mathcal{C}_a-\mathcal{C}_b = 0$. Counts due to {\tt type-3} graphlets are captured in $\mathcal{C}_a$, added to the global count, but they are not captured by $\mathcal{C}_b$. {\tt type-2} graphlets are captured once in $\mathcal{C}_a$ and once in $\mathcal{C}_b$. So $\mathcal{C}_a-\mathcal{C}_b = 0$.  {\tt type-1} graphlets are removed, and therefore, corresponding counts are deleted, captured in $\mathcal{C}_b$, but not captured in $\mathcal{C}_a$. Transitioning of the graphlets as in \textbf{case-2} of Lemma~\ref{lemma-1} are handled similarly as in Lemma~\ref{lemma-1}. This completes the proof.
    
\end{proof}

\remove{
\section{Approximate maintenance of graphlet counts in graph stream}

As the graph becomes larger, it requires a large amount of memory to store the entire graph. Given the memory budget, we can still maintain the count approximately in a streaming setting. 

Let's follow the paper titled ``TRIANGLE AND FOUR CYCLE COUNTING WITH PREDICTIONS IN GRAPH STREAMS".
}
\section{Experiments}

We evaluated our algorithm on large datasets to show the efficiency over the baseline \pgdnaive. Every time we add a batch of edges and perform recomputation from scratch using the \pgd. We show the description of the dataset in Table~\ref{table:dataset} taken from the Stanford SNAP~\cite{snapnets}. We implemented all the algorithms in C++ and evaluated them in a system equipped with Intel(R) Xeon(R) W-2223 CPU $@$ $3.60$GHz with $4$ physical cores and $32$G RAM. We execute \pgd with OpenMP to run in parallel.

\begin{table}[ht!]
\small
\caption{\label{table:dataset}
Characteristics of the Data sets}
\vspace*{-4mm}
\begin{center}
\scalebox{0.888}{
\begin{tabular}{|c|c|c|}
\hline
\textbf{Dataset} & $|V|$  & $|E|$ \\
\hline
\wikitalk (\wikitalks) & $2394385$ & $4659565$ \\
\hline
\wikivote (\wikivotes) & $7115$ & $100762$ \\
\hline
\pokec (\pokecs) & $1632803$ & $22301964$ \\
\hline
\lj (\ljs) & $4033137$ & $27933062$ \\
\hline
\end{tabular}}
\end{center}
\end{table}

\noindent\textbf{Batch creation:}~For the incremental batch computation, we group edges in a batch and execute the algorithms by inserting one batch at a time. For the fully dynamic batch, we create the batches as follows: First, we give $+$ive and $-$ive labels to the edges by sampling each edge with probability $p$ for selecting it to be a positive edge and putting the sampled edges in a queue. Later, we extract edges from the queue and sample them with probability $1-p$ to assign a $-$ive sign. This way, we create both positive and negative edges where the positive edge refers to the addition and the negative refers to the deletion. Next, we group them as before for creating batches. For our experiment, we choose $p=0.7$.

\noindent\textbf{Computation on incremental batch:}~We empirically evaluated our algorithm \incremental to show that \incremental is much faster than \pgdnaive on large graphs such as \pokec and \lj. This is because \incremental counts graphlets on a relatively small size subgraph, which is local to the changes in the graph due to the batch update, while \pgdnaive access the entire network built so far to count the graphlets. Also, the performance of \incremental and \pgdnaive is almost similar on \wikivote (Figure~\ref{fig:wiki-vote-incr-10}) and \wikitalk (Figure~\ref{fig:wiki-talk-incr-10}) because these networks are relatively small and sparse. Also, \incremental needs to compute $3$-hop neighborhood for constructing the local subgraph, which is not substantially smaller than the original network where \pgd work in the baseline. The results on batches of size $10$ and $100$ are shown in Figure~\ref{fig:incr-batch-10} and Figure~\ref{fig:incr-batch-100}, respectively. In both cases, we observe that as the size of the graph increases with the increase in the edge counts, \incremental is substantially faster than the baseline \pgdnaive in two large networks \pokec (Figure~\ref{fig:soc-pokec-incr-10} and Figure~\ref{fig:pokec-100}) and \lj (Figure~\ref{fig:live-journal-incr-10} and Figure~\ref{fig:live-journal-100}). This is as expected because \pgdnaive accesses a larger graph each time compared to the size of the graph accessed by \incremental. 

\noindent\textbf{Computation on fully dynamic batch:}~In a fully dynamic batch, we first removed the edges, which are both added and deleted, then executed Algorithm \fullydynamic on the rest of the edges. Similar to the incremental stream, we observed similar behavior on two large networks \pokec and \lj as shown in Figure~\ref{fig:fd-batch-100}. Similar to the earlier observation, as the network size increases, the runtime of \fullydynamic is significantly faster compared to \pgdnaive than the initial state of the graph. 

\begin{figure}[t!]
	\centering
	\vspace{-3ex}
        \begin{minipage}{.5\textwidth}
	\subfloat[][{\wikivote}]{\includegraphics[scale=0.26]{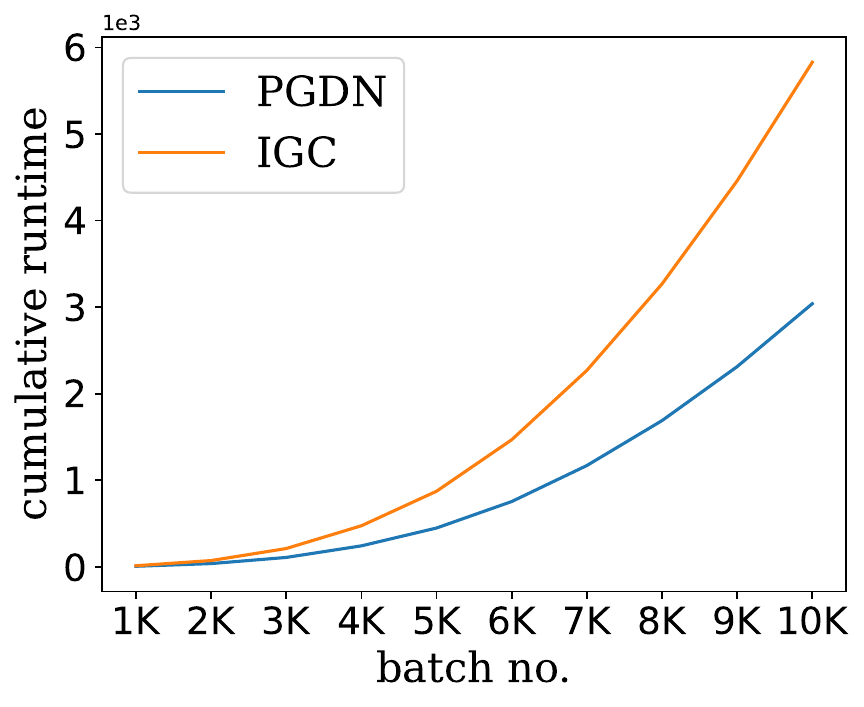}\label{fig:wiki-vote-incr-10}}~{}~{}
	\hspace{1ex}
	\subfloat[][{\wikitalk}]{\includegraphics[scale=0.26]{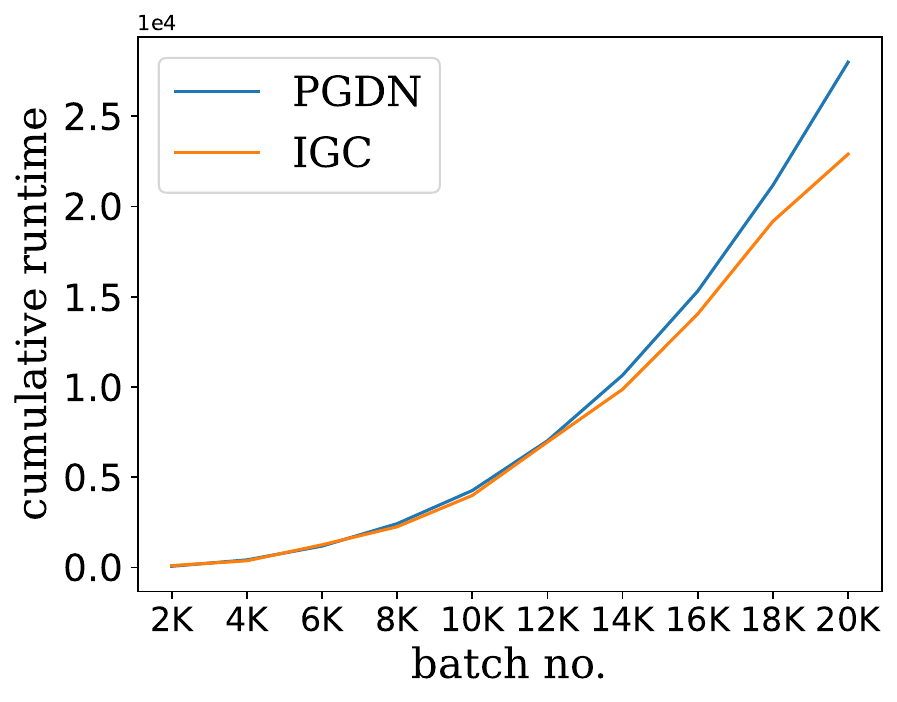}\label{fig:wiki-talk-incr-10}}~{}~{}
        \end{minipage}
        \begin{minipage}{.5\textwidth}
	\subfloat[][\pokec]{\includegraphics[scale=0.26]{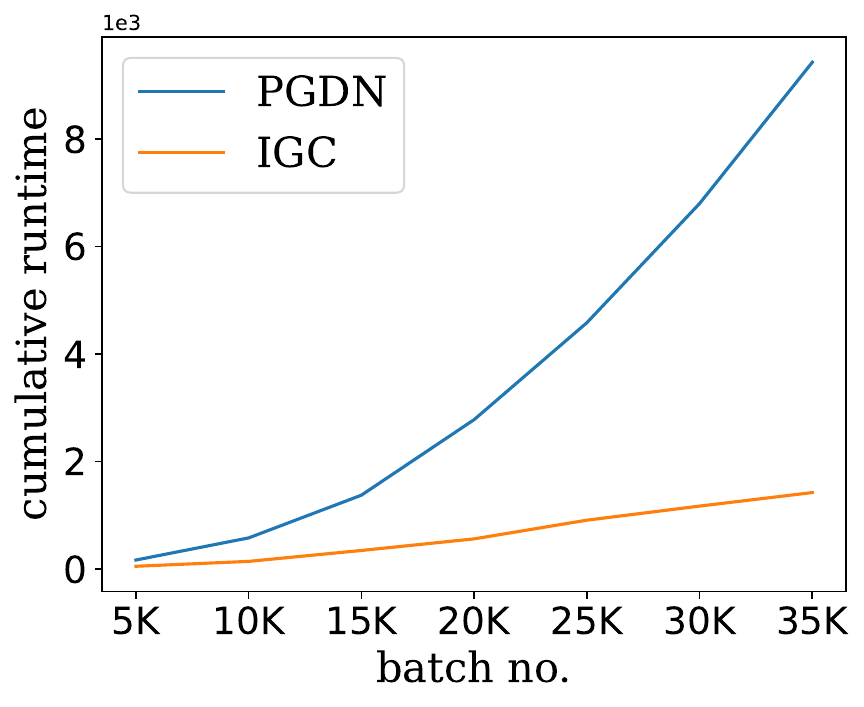}\label{fig:soc-pokec-incr-10}}~{}~{}
	\hspace{1ex}
	\subfloat[][\lj]{\includegraphics[scale=0.26]{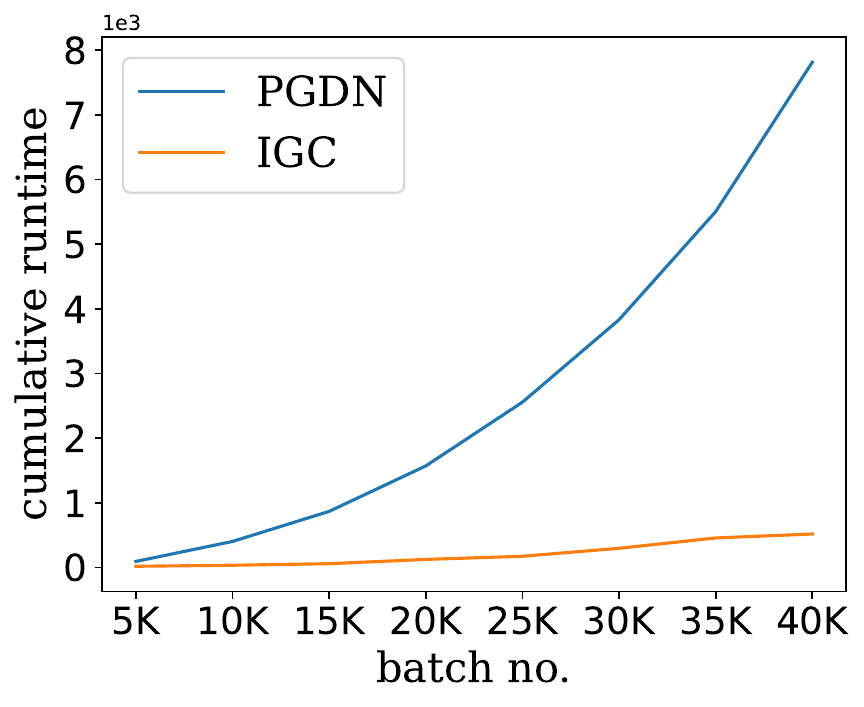}\label{fig:live-journal-incr-10}}~{}~{}
        \end{minipage}
	\vspace{-1ex}
	\caption{ Cumulative time to add insert only batches (of size $10$). Time at $y$-axis shows the overall time taken to add all the batches so far (represented by the batch number on the $x$-axis)
	}\label{fig:incr-batch-10}
	\vspace{-3ex}
\end{figure}

\begin{figure}[h]
\centering
 \begin{minipage}{.5\textwidth}
  \subfloat[][{\lj}]{\includegraphics[scale=0.3]{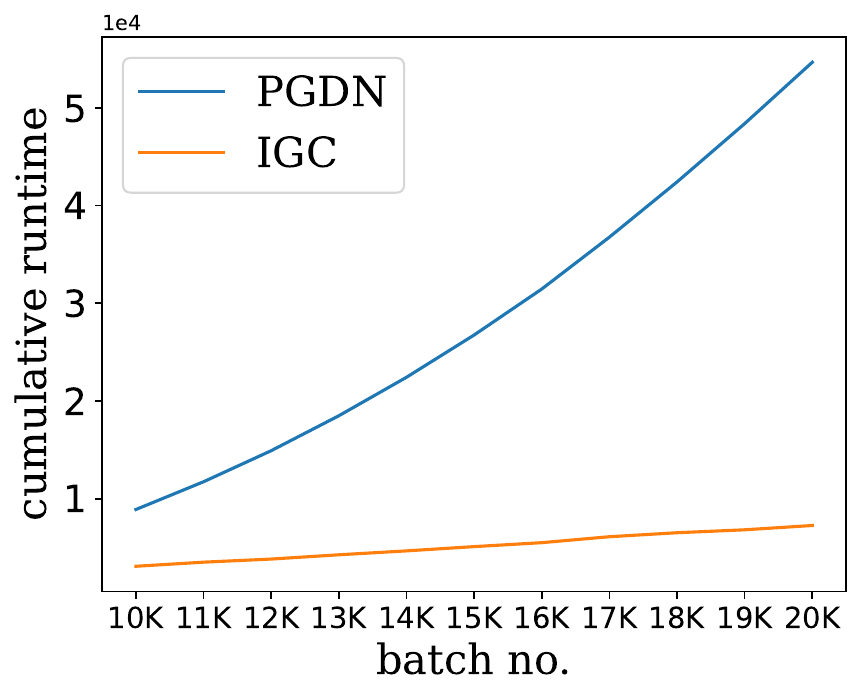}\label{fig:live-journal-100}}~{}~{}
  \subfloat[][{\pokec}]{\includegraphics[scale=0.3]{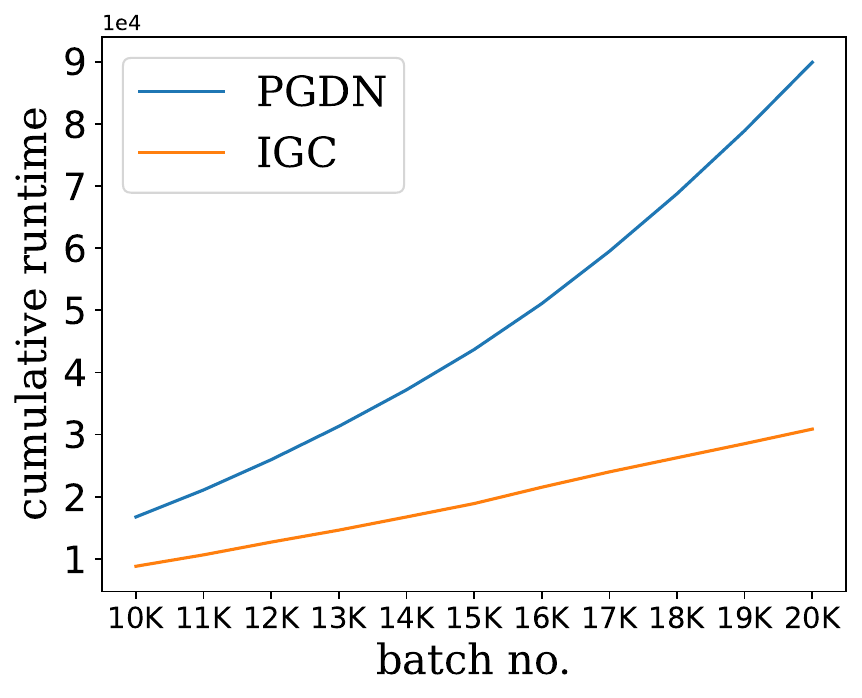}\label{fig:pokec-100}}~{}~{}
  \caption{Cumulative time to add insert only batches (of size $100$). Time at $y$-axis shows the overall time taken to add all the batches so far (represented by the batch number on the $x$-axis)}\label{fig:incr-batch-100}
 \end{minipage}
\end{figure}

\begin{figure}[h]
\centering
 \begin{minipage}{.5\textwidth}
  \subfloat[][{\lj}]{\includegraphics[scale=0.3]{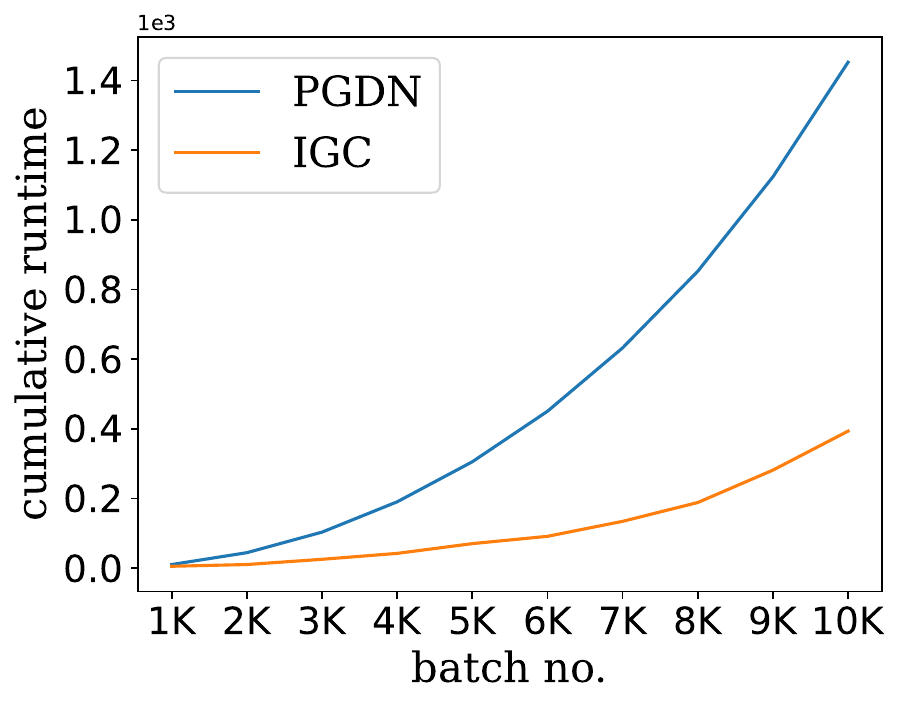}\label{fig:live-journal-100-fd}}~{}~{}
  \subfloat[][{\pokec}]{\includegraphics[scale=0.3]{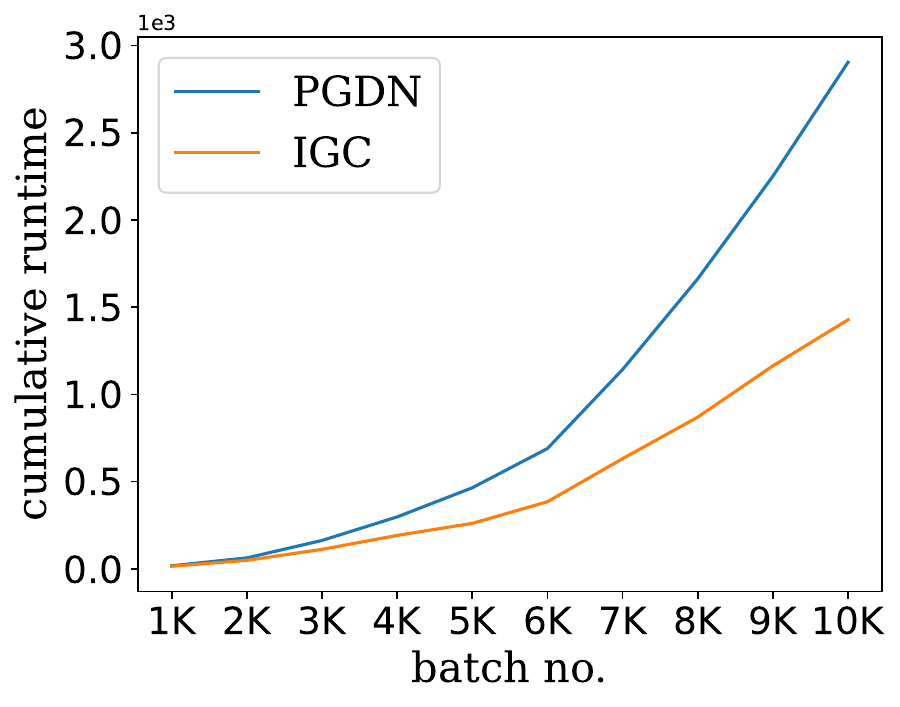}\label{fig:pokec-100-fd}}~{}~{}
  \caption{Cumulative time to add fully-dynamic batches (of size $100$). Time at $y$-axis shows the overall time taken to add all the batches so far (represented by the batch number on the $x$-axis)}\label{fig:fd-batch-100}
 \end{minipage}
%
\end{figure}

\remove{
\begin{figure*}[t!]
	\centering
	\vspace{-3ex}
	\resizebox{1\linewidth}{!}{
	\subfloat[][{\lj}]{\includegraphics[width=0.75\textwidth]{results/LJ-100-FD}\label{fig:live-journal-fd}}~{}~{}
	\subfloat[][{\pokec}]{\includegraphics[width=0.75\textwidth]{results/SP-100-FD}\label{fig:soc-pokec-fd}}~{}~{}
	}
	\vspace{-1ex}
	\caption{ \bf {Computation time for incremental batch of size $10$}
	}\label{fig:incr-batch-10}
	\vspace{-3ex}
\end{figure*}
}
\section{Conclusion}

In this work, we have developed an efficient algorithm for updating the counts of size-$4$ induced graphlets. Our algorithm is based on the exploration of subgraphs local to change in the graph by the fully dynamic stream of edges. We experimentally evaluated our algorithm to show the efficiency over the baseline. 

\bibliographystyle{ACM-Reference-Format}
\bibliography{references}

\end{document}